\definecolor{darkred}{HTML}{950606}
\title{A General Metric-Space Formulation of the Time Warp Edit Distance (TWED) \\ \Large Technical Note}
\author{Zhen Yi Lau \\lauzyi29@gmail.com}
\date{\today}
\newtheorem{theorem}{Theorem}[section]
\newtheorem{lemma}[theorem]{Lemma}
\newtheorem{proposition}[theorem]{Proposition}
\newtheorem{definition}[theorem]{Definition}
\newtheorem{corollary}[theorem]{Corollary}
\begin{document}

\theoremstyle{remark}
\newtheorem{remark}[theorem]{Remark}

\maketitle

\begin{abstract}
This short technical note presents a formal generalization of the Time Warp Edit Distance (TWED) proposed by Marteau (2009) to arbitrary metric spaces.
By viewing both the observation and temporal domains as metric spaces $(X, d)$ and $(T, \Delta)$, 
we define a Generalized TWED (GTWED) that remains a true metric under mild assumptions. 
We provide self-contained proofs of its metric properties and show that the classical TWED is recovered as a special case when 
$X = \mathbb{R}^d$, $T \subset \mathbb{R}$, and $g(x) = x$. 
This note focuses on the theoretical structure of GTWED and its implications for extending elastic distances beyond time series, which enables the use of TWED-like metrics on sequences over arbitrary domains such as symbolic data, manifolds, or embeddings.
\end{abstract}

\noindent\textbf{Keywords:} Time Warp Edit Distance, Generalized TWED, Metric Spaces, Dynamic Time Warping

\section{Introduction}

The Time Warp Edit Distance (TWED), introduced by \citet{marteau2009twed}, is a metric designed to measure dissimilarity between sequences, particularly time series. Unlike purely elastic measures such as Dynamic Time Warping (DTW) \citep{DTW}, TWED incorporates both temporal elasticity and editing operations (insertions and deletions), controlled by two parameters: a stiffness parameter $\gamma > 0$ and a penalty parameter $\lambda \ge 0$. This formulation ensures that TWED is a true metric—it satisfies non-negativity, symmetry, identity of indiscernibles, and the triangle inequality—when the underlying pointwise distance is itself a metric.

Intuitively, TWED measures the cost of transforming one sequence into another by allowing flexible alignment in time while penalizing both temporal and spatial deviations. Whereas DTW achieves alignment by minimizing cumulative distance through warping paths without any penalty for insertions or deletions, TWED explicitly introduces these penalties to enforce stronger metric behavior and temporal regularization. As a result, TWED can be viewed as a hybrid between elastic alignment and edit-distance-based matching, yielding improved interpretability and robustness to time distortions.

The formulation proposed by \citet{marteau2009twed} defines this distance specifically over discrete time series, where each observation is associated with a scalar timestamp and distances are computed in a Euclidean space. Consequently, TWED is inherently restricted to real-valued, time-indexed data and does not directly generalize to sequences defined over non-numeric or more abstract domains.

To address this limitation, one can generalize TWED beyond the real line by viewing sequences as collections of elements indexed by an ordered metric space $(T, \Delta)$ and valued in another metric space $(X, d)$. In this generalized setting, both $(X, d)$ and $(T, \Delta)$ may represent discrete spaces equipped with custom metrics, or spaces whose elements are themselves structured objects such as vectors or embeddings—allowing comparisons between “vectors of vectors” $(a_1, a_2, \dots)$ where each $a_i \in X$. The metric $\Delta$ provides flexible alignment across indices, enabling the framework to apply to any type of sequential data, not necessarily time series, including symbolic sequences, event streams, or hierarchical representations.

\section{Definition of the General Time Warp Edit Distance (GTWED)}
\label{sec:definition_gtwed}

\paragraph{Metric setting.}
Let $(\mathcal{X}, d)$ be a metric space\footnote{Briefly, a metric space is a set $\mathcal{X}$ equipped with a function $d:\mathcal{X}\times\mathcal{X}\to[0,\infty)$ satisfying: (i) $d(x,y)=0\Leftrightarrow x=y$, (ii) $d(x,y)=d(y,x)$ (symmetry), and (iii) $d(x,z)\le d(x,y)+d(y,z)$ (triangle inequality). See \citep{Tao2023AnalysisII} for details.}, where 
$d : \mathcal{X} \times \mathcal{X} \to [0, \infty)$
is a metric between observations.
Let $(\mathcal{T}, \Delta)$ be a metric space equipped with a total order $\le$,
where $\Delta : \mathcal{T} \times \mathcal{T} \to [0, \infty)$ 
quantifies temporal separation
(e.g.\ $\Delta(t_i, t_j) = |t_i - t_j|$ when $\mathcal{T}\subseteq\mathbb{R}$).

For $p \in \mathbb{N}^{+}$, let 
$t : \{1, \dots, p\} \to \mathcal{T}$ 
be strictly increasing, and let $a_{t(i)} \in \mathcal{X}$ denote
the observation at index $t(i)$.
The set of all finite time-indexed sequences is
\[
\mathcal{U} =
\Bigl\{
A = (a_{t(1)}, a_{t(2)}, \dots, a_{t(p)})
\;\Bigm|\;
p \in \mathbb{N}^{+},\
t(1) < \cdots < t(p)
\Bigr\}
\;\cup\; \{\Omega\},
\]
where $\Omega$ denotes the empty series.

For $0 \le q \le p$, we write
$A^q = (a_{t(1)}, \dots, a_{t(q)})$
with $A^0 = \Omega$.

Let
\[
A^p = (a_{t(1)}, \dots, a_{t(p)}),
\qquad
B^q = (b_{s(1)}, \dots, b_{s(q)}),
\]
with strictly increasing index functions 
$t : \{1,\dots,p\} \to \mathcal{T}$ and 
$s : \{1,\dots,q\} \to \mathcal{T}$.

\paragraph{Metric transform on the local product-space cost.}

Let 
$d : \mathcal{X} \times \mathcal{X} \to [0,\infty)$ 
and 
$\Delta : \mathcal{T} \times \mathcal{T} \to [0,\infty)$ 
be metrics on $\mathcal{X}$ and $\mathcal{T}$, respectively.
For a stiffness parameter $\gamma>0$, define the 
\emph{combined product-space metric}
\[
D(a_{t(i)}, b_{s(j)}) 
  = d(a_{t(i)}, b_{s(j)}) + \gamma\,\Delta(t_i, s_j).
\]

Let $g : [0,\infty) \to [0,\infty)$ 
be an increasing, subadditive function satisfying $g(0)=0$.
Applying $g$ to $D$ yields the 
\emph{regularized local metric}
\[
\tilde D(a_{t(i)}, b_{s(j)}) = g\!\bigl(D(a_{t(i)}, b_{s(j)})\bigr).
\]

Typical choices include
\[
g(x)=x+\alpha\min(x,\tau),
\qquad\text{or}\qquad
g(x)=x+\alpha\bigl(1-e^{-x/\tau}\bigr),
\]
for tunable parameters $\alpha,\tau>0$.

\paragraph{Recursive definition of GTWED.}
For parameters $\lambda\ge0$ (gap penalty) and $\gamma>0$ (stiffness),
define recursively
\[
\delta_{\lambda,\gamma}(A^p,B^q)
=
\min
\begin{aligned}[t]
\bigl\{\, &\delta_{\lambda,\gamma}(A^{p-1},B^q)+\Gamma_A,\\
           &\delta_{\lambda,\gamma}(A^{p-1},B^{q-1})+\Gamma_{A,B},\\
           &\delta_{\lambda,\gamma}(A^{p},B^{q-1})+\Gamma_B
\,\bigr\}.
\end{aligned}
\]

The local operation costs are defined as
\[
\begin{aligned}
\Gamma_A &= \tilde D(a_{t(p)}, a_{t(p-1)}) + \lambda,\\[4pt]
\Gamma_B &= \tilde D(b_{s(q)}, b_{s(q-1)}) + \lambda,\\[4pt]
\Gamma_{A,B} &= 
   \tilde D(a_{t(p)}, b_{s(q)})
 + \tilde D(a_{t(p-1)}, b_{s(q-1)}).
\end{aligned}
\]

The recursion is initialized by
\[
\delta_{\lambda,\gamma}(A^0,B^0)=0,\qquad
\delta_{\lambda,\gamma}(A^p,\Omega)=\sum_{i=1}^{p}\Gamma_A,\qquad
\delta_{\lambda,\gamma}(\Omega,B^q)=\sum_{j=1}^{q}\Gamma_B.
\]

\paragraph{Sentinel convention.}
For notational completeness of the dynamic programming recursion, we extend each sequence
by introducing a \emph{sentinel} element at index~0.
Specifically, we define
\[
a_{t(0)} := a_{t(1)}, \qquad b_{s(0)} := b_{s(1)},
\]
and correspondingly
\[
t(0) := t(1), \qquad s(0) := s(1).
\]
This convention ensures that all local costs
\(
\widetilde{D}(a_{t(i)}, a_{t(i-1)})
\)
and
\(
\widetilde{D}(b_{s(j)}, b_{s(j-1)})
\)
are well-defined even when \(i=1\) or \(j=1\).
Under this setting, the initialization terms
\(\sum \Gamma_A\) and \(\sum \Gamma_B\)
in the recursion evaluate to~0, which makes the dynamic
programming boundary conditions consistent with the metric axioms.

At each step, the dynamic program selects the minimal cumulative cost
to transform prefix $A^p$ into $B^q$ through
(1) deletion of the most recent element of $A$ ($\Gamma_A$),
(2) matching the current elements of $A$ and $B$ ($\Gamma_{A,B}$),
or (3) deletion of the most recent element of $B$ ($\Gamma_B$).

Intuitively, GTWED balances \emph{how different} two samples are in space
(via $d$) and \emph{how far apart in time} they occur (via $\Delta$ and $\gamma$),
while $g$ provides a regularized smoothness control.
Under the above assumptions on $d$, $\Delta$, and $g$, as we will show later, the resulting $\delta_{\lambda,\gamma}$ is a true metric on $\mathcal{U}$.

\section{Metric Properties of GTWED}
\label{sec:metric_proof}

Before proving that the General Time Warp Edit Distance (GTWED) is a metric,
we recall the definition of the classical Time Warp Edit Distance (TWED)
and its known metric properties.

\subsection{Classical TWED and its Metric Property}

\begin{definition}[Traditional TWED, \cite{marteau2009twed}]
Let $(\mathcal{X}, d)$ be a metric space, 
and let $\mathcal{T}\subseteq\mathbb{R}$ be a totally ordered time domain.
For $\lambda\ge0$ (gap penalty) and $\gamma>0$ (stiffness), 
the TWED between two time series 
$A^p=(a_{t(1)},\dots,a_{t(p)})$ and 
$B^q=(b_{s(1)},\dots,b_{s(q)})$ is defined recursively as
\[
\delta_{\lambda,\gamma}^{\mathrm{TWED}}(A^p,B^q)
=
\min
\begin{aligned}[t]
\bigl\{\,
&\delta_{\lambda,\gamma}^{\mathrm{TWED}}(A^{p-1},B^q)
  + d(a_{t(p)},a_{t(p-1)}) + \gamma|t_p-t_{p-1}| + \lambda,\\
&\delta_{\lambda,\gamma}^{\mathrm{TWED}}(A^{p-1},B^{q-1})
  + d(a_{t(p)},b_{s(q)}) + d(a_{t(p-1)},b_{s(q-1)})\\
&\qquad
  + \gamma|t_p-s_q| + \gamma|t_{p-1}-s_{q-1}|,\\
&\delta_{\lambda,\gamma}^{\mathrm{TWED}}(A^{p},B^{q-1})
  + d(b_{s(q)},b_{s(q-1)}) + \gamma|s_q-s_{q-1}| + \lambda
\,\bigr\}.
\end{aligned}
\]

\end{definition}

\begin{proposition}[TWED is a metric]
\label{prop:twed_metric}
If $d$ is a metric and $\gamma>0$, $\lambda\ge0$, 
then $\delta_{\lambda,\gamma}^{\mathrm{TWED}}$ is a metric
on the set of finite time series $\mathcal{U}$.
\end{proposition}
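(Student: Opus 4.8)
The plan is to verify the four metric axioms in turn, the triangle inequality being the one with real content. Non-negativity is immediate by induction on $p+q$: every local cost in the recursion is a sum of terms $d(\cdot,\cdot)\ge 0$, $\gamma\lvert\cdot\rvert\ge 0$ and $\lambda\ge 0$, and the initialization values are non-negative, so $\delta^{\mathrm{TWED}}_{\lambda,\gamma}(A^p,B^q)\ge 0$. Symmetry follows because interchanging $A\leftrightarrow B$ carries the first branch of the recursion onto the third and back, while fixing the middle (match) branch, whose cost is manifestly symmetric in $(a,t)\leftrightarrow(b,s)$; an induction on $p+q$ then gives $\delta(A,B)=\delta(B,A)$. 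For the identity of indiscernibles, the bound $\delta(A,A)\le 0$ comes from selecting the match branch at every step of the recursion for $\delta(A^p,A^p)$, each step contributing $d(a_{t(i)},a_{t(i)})+\gamma\lvert t_i-t_i\rvert+\dots=0$, down to $\delta(\Omega,\Omega)=0$; with non-negativity this yields $\delta(A,A)=0$. The converse $\delta(A,B)=0\Rightarrow A=B$ is where $\gamma>0$ is essential: a zero-cost computation of $\delta(A,B)$ can never use a deletion branch with a positive timestamp gap (and $t,s$ are strictly increasing), while a match branch of cost $0$ forces $d(a_{t(i)},b_{s(j)})=0$, hence $a_{t(i)}=b_{s(j)}$ by the axioms for $d$, and $\lvert t_i-s_j\rvert=0$; following the forced branch choices, together with the boundary values fixed by the initialization, gives $p=q$ and then $A=B$.

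For the triangle inequality I would show $\delta(A,C)\le\delta(A,B)+\delta(B,C)$ by induction on the total length $N=p+q+k$ of the three series $A=A^p$, $B=B^q$ and $C=C^k=(c_{r(1)},\dots,c_{r(k)})$ (an equivalent route recasts $\delta$ as a minimum over admissible alignments and splices an $A$-to-$B$ alignment with a $B$-to-$C$ alignment, but the merging of the two $B$-alignments is fiddly, so I prefer the recursion). When one series is empty the claim reduces to bounds such as $\delta(A,C)\le\delta(A,\Omega)+\delta(\Omega,C)$, valid because ``delete all of $A$, then insert all of $C$'' is an admissible alignment obtained by iterating the deletion branches and because $\delta(\cdot,\Omega)$ and $\delta(\Omega,\cdot)$ are given explicitly. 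In the inductive step, with all three series non-empty, expand $\delta(A,B)$ and $\delta(B,C)$ at their minimizing branches, giving nine combinations. Five are dispatched at once: if the optimal $A$--$B$ branch deletes the last element of $A$, bound $\delta(A^p,C^k)\le\delta(A^{p-1},C^k)+\bigl(d(a_{t(p)},a_{t(p-1)})+\gamma\lvert t_p-t_{p-1}\rvert+\lambda\bigr)$ and apply the induction hypothesis to $(A^{p-1},B^q,C^k)$; symmetrically if the optimal $B$--$C$ branch deletes the last element of $C$ (the combination deleting from both ends lies in both families). The four surviving combinations pair ``$A$--$B$ matches or deletes from $B$'' with ``$B$--$C$ deletes from $B$ or matches''. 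In each, select in the $A$--$C$ recursion the branch mirroring how the shared element(s) of $B$ were consumed---delete the last element of $A$, delete the last element of $C$, match $a_{t(p)}$ with $c_{r(k)}$, or (when both $\delta(A,B)$ and $\delta(B,C)$ consumed the same element of $B$) invoke the induction hypothesis with no branch needed---apply it to the appropriately shrunk index triple, and check the residual scalar inequality. That check always reduces to: the triangle inequality for $d$ along a chain through the relevant elements of $B$ (e.g.\ $d(a_{t(p)},c_{r(k)})\le d(a_{t(p)},b_{s(q)})+d(b_{s(q)},c_{r(k)})$, or a two-hop chain for the consecutive-difference terms), the triangle inequality for $\lvert\cdot\rvert$ on the timestamps routed through $B$'s timestamps, and the exact cancellation of the $\lambda$ penalties appearing on both sides---so $\lambda\ge 0$ never obstructs anything.

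The main obstacle is the bookkeeping in these four surviving cases: in each one must choose the $A$--$C$ branch whose shrinking of the index triple simultaneously makes the induction hypothesis applicable to a strictly smaller $N$ and leaves a remainder that telescopes through $B$ via the two triangle inequalities and the $\lambda$-cancellation. A secondary point is the endpoints $p=1$, $q=1$, or $k=1$, where a prefix degenerates to $\Omega$: there one must confirm that the boundary convention built into the initialization is consistent both with the zero-cost diagonal alignment (needed for $\delta(A,A)=0$) and with the chained bounds above. With the four-case verification and this boundary bookkeeping in hand, the four axioms together show that $\delta^{\mathrm{TWED}}_{\lambda,\gamma}$ is a metric on $\mathcal U$; the same skeleton---adding only the monotonicity and subadditivity of $g$ to push each scalar triangle inequality through $\widetilde D=g\circ D$---will carry over to GTWED.
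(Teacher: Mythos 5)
Your proof is essentially correct, but note that the paper does not actually prove this proposition itself: it simply cites Marteau (2009, Proposition 1). The closest in-paper argument is the proof of Theorem~\ref{thm:gtwed_metric}, which establishes the generalized statement and specializes to this one by taking $g=\mathrm{id}$, $d=\lVert\cdot\rVert_2$, $\Delta=|\cdot|$. There the triangle inequality is proved by exactly the route you set aside as fiddly: one fixes optimal alignment paths $\pi_{AB}$ and $\pi_{BC}$, traverses them in lockstep on the $B$-index, and emits an $A$--$C$ move for each pairing of elementary moves, bounding its cost via the three local inequalities $\Gamma_{A,C}\le\Gamma_{A,B}+\Gamma_{B,C}$, $\Gamma_A\le\Gamma_{A,B}+\Gamma_B$, $\Gamma_C\le\Gamma_B+\Gamma_{B,C}$. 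Your structural induction on $p+q+k$ is the alternative (closer to Marteau's original argument) and rests on precisely the same three inequalities; your dispatch of five of the nine branch combinations without expanding the other distance, and your handling of the (Del-$B$, Del-$B$) combination by invoking the induction hypothesis with no emitted move, are both sound, and they are the analogues of the paper's lockstep bookkeeping. What the induction buys is that you never need to formalize alignment paths or argue that the composed point sequence is itself a valid path; what the path-composition argument buys is that the case analysis is local and does not require tracking which of the three prefixes shrinks to keep the induction well-founded. One substantive caveat applies to both routes: your converse direction of the identity of indiscernibles leans on every deletion having a strictly positive timestamp gap, which fails for the deletion of the \emph{first} element under the sentinel convention $t(0):=t(1)$, $a_{t(0)}:=a_{t(1)}$ when $\lambda=0$ (then $\delta(A^1,\Omega)=0$ with $A^1\neq\Omega$). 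You flag the boundary bookkeeping but do not resolve it; the paper's own Lemma~\ref{lem:gap_positive} quietly has the same gap, so this is worth stating explicitly rather than leaving as a ``secondary point.''
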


\begin{proof}
See \citet[Proposition~1]{marteau2009twed}.
\end{proof}

\subsection{Supporting Lemmas for GTWED}

\begin{lemma}[Metric Transform Lemma]
\label{lem:metric_transform}
Let $M$ be a metric and 
$g:[0,\infty)\to[0,\infty)$ an increasing, subadditive function
with $g(0)=0$ and $g(x)>0$ for all $x>0$.
Then $\tilde M = g\circ M$ is also a metric.
\end{lemma}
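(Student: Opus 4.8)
The plan is to verify the four metric axioms for $\tilde M = g \circ M$ directly, leaning on the corresponding properties of $M$ together with the hypotheses on $g$. First I would establish non-negativity: since $M$ maps into $[0,\infty)$ and $g:[0,\infty)\to[0,\infty)$, the composition $\tilde M(x,y) = g(M(x,y))$ is automatically non-negative. Symmetry is equally immediate, since $M(x,y)=M(y,x)$ forces $g(M(x,y))=g(M(y,x))$.

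Next I would handle the identity of indiscernibles. If $x=y$ then $M(x,y)=0$, so $\tilde M(x,y)=g(0)=0$. Conversely, if $\tilde M(x,y)=g(M(x,y))=0$, then since $g(r)>0$ for every $r>0$ by hypothesis, we must have $M(x,y)=0$, hence $x=y$ because $M$ is a metric. This is where the assumption $g(x)>0$ for $x>0$ is essential; monotonicity and $g(0)=0$ alone would not rule out $g$ vanishing on an initial interval.

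The triangle inequality is the main step. Fix points $x,y,z$ and abbreviate $a=M(x,y)$, $b=M(y,z)$, $c=M(x,z)$, so that $c \le a+b$ by the triangle inequality for $M$. I want $g(c) \le g(a)+g(b)$. Since $g$ is increasing and $c \le a+b$, we get $g(c) \le g(a+b)$; then subadditivity of $g$ gives $g(a+b) \le g(a)+g(b)$. Chaining these yields $\tilde M(x,z) = g(c) \le g(a)+g(b) = \tilde M(x,y)+\tilde M(y,z)$, as required. The only subtlety is making sure monotonicity is applied in the correct direction — we need $g$ nondecreasing so that the larger argument $a+b$ produces the larger (or equal) value — and that $a+b$ lies in the domain $[0,\infty)$ of $g$, which it does since $a,b\ge 0$.

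Having checked all four axioms, I would conclude that $\tilde M$ is a metric. I do not anticipate any real obstacle here; the argument is short and the hypotheses on $g$ are exactly tailored to what each axiom needs (subadditivity plus monotonicity for the triangle inequality, strict positivity away from $0$ for the identity of indiscernibles). The only thing worth stating explicitly for the reader is why each hypothesis on $g$ cannot be dropped, which I would mention in a brief closing remark rather than belabor in the proof itself.
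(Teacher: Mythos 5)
Your proposal is correct and follows essentially the same route as the paper's proof: the triangle inequality via monotonicity ($g(M(x,z))\le g(M(x,y)+M(y,z))$) followed by subadditivity, with the remaining axioms checked directly and strict positivity of $g$ away from $0$ securing the identity of indiscernibles. Your write-up is merely more explicit about the easy axioms, which the paper dispatches in one line.
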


\begin{proof}
For any $x,y,z$,
\[
\begin{aligned}
\tilde M(x,z) &= g(M(x,z)) \\
&\le g(M(x,y)+M(y,z)) \\
&\le g(M(x,y)) + g(M(y,z)) \\
&= \tilde M(x,y) + \tilde M(y,z).
\end{aligned}
\]
using monotonicity and subadditivity of $g$.
Nonnegativity, symmetry, and $g(M(x,y))>0$ for $M(x,y)>0$ follow immediately.
\end{proof}

\begin{lemma}[Positivity of Local Gap Costs]
\label{lem:gap_positive}
For strictly increasing timestamps, $\Delta(t_p,t_{p-1})>0$.
With $\gamma>0$ and $g(x)>0$ for $x>0$,
\[
\Gamma_A = \tilde D(a_{t(p)},a_{t(p-1)})+\lambda>0,
\qquad
\Gamma_B = \tilde D(b_{s(q)},b_{s(q-1)})+\lambda>0.
\]
Hence any insertion or deletion has strictly positive cost.
\end{lemma}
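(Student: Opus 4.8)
The plan is to chain three elementary facts: consecutive timestamps in a sequence of $\mathcal{U}$ are distinct, so $\Delta$ between them is strictly positive by the metric axioms; adding the nonnegative term $d$ and multiplying the $\Delta$-term by $\gamma>0$ keeps the combined cost $D$ strictly positive; and $g$, being strictly positive on $(0,\infty)$, preserves that positivity after the regularizing transform.

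First I would establish $\Delta(t_p,t_{p-1})>0$. Since the index function $t$ is strictly increasing, $t_{p-1}<t_p$, hence $t_{p-1}\neq t_p$; the identity-of-indiscernibles axiom for the metric $\Delta$ then forces $\Delta(t_p,t_{p-1})\neq 0$, and nonnegativity upgrades this to $\Delta(t_p,t_{p-1})>0$. The same applies verbatim to $s$, giving $\Delta(s_q,s_{q-1})>0$. Next I would bound the product-space cost from below:
\[
D(a_{t(p)},a_{t(p-1)}) = d(a_{t(p)},a_{t(p-1)}) + \gamma\,\Delta(t_p,t_{p-1}) \;\ge\; \gamma\,\Delta(t_p,t_{p-1}) \;>\; 0,
\]
using $d\ge 0$ and $\gamma>0$. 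Applying the hypothesis that $g(x)>0$ for $x>0$ yields $\tilde D(a_{t(p)},a_{t(p-1)}) = g\bigl(D(a_{t(p)},a_{t(p-1)})\bigr)>0$, and then $\Gamma_A = \tilde D(a_{t(p)},a_{t(p-1)}) + \lambda > 0$ since $\lambda\ge 0$. The identical chain with $b$ and $s$ gives $\Gamma_B>0$. The closing sentence of the lemma is then immediate: every $\Gamma_A$- or $\Gamma_B$-step in the recursion contributes a strictly positive increment.

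I do not expect a genuine obstacle here — the content is essentially a one-line consequence of axiom (i) for $\Delta$ together with the standing hypotheses on $g$. The only point requiring a word of care is the sentinel convention $t(0):=t(1)$, $a_{t(0)}:=a_{t(1)}$: there $\Delta(t_1,t_0)=0$ and $d(a_{t(1)},a_{t(0)})=0$, so $\tilde D(a_{t(1)},a_{t(0)})=g(0)=0$ and the corresponding boundary term equals $\lambda$, which need not be strictly positive. This does not contradict the lemma, whose hypothesis ``strictly increasing timestamps'' already excludes the degenerate pair $t_0=t_1$; I would simply note in passing that the vanishing of these sentinel terms is precisely what makes the initialization $\sum\Gamma_A=\sum\Gamma_B=0$ consistent, and that the positivity conclusion is exactly what is needed for genuine (non-sentinel) insertion and deletion steps in the later metric proof.
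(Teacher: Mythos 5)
Your proof is correct and is exactly the argument the paper intends: the lemma is stated without a proof in the source, its statement itself already encoding the chain (strict monotonicity of $t$ $\Rightarrow$ $\Delta(t_p,t_{p-1})>0$ $\Rightarrow$ $D\ge\gamma\Delta>0$ $\Rightarrow$ $g(D)>0$ $\Rightarrow$ $\Gamma_A>0$) that you spell out in full. Your closing observation about the sentinel convention --- that the boundary terms with $t(0):=t(1)$ reduce to $\lambda$-free zero costs and are deliberately excluded from the lemma's scope --- is a useful clarification the paper only addresses informally in the ``Sentinel convention'' paragraph.
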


\begin{lemma}[Sum of metrics]
\label{lem:sum_metrics}
Let $(\mathcal{X}, d)$ and $(\mathcal{T}, \Delta)$ be metric spaces,
and let $\alpha,\beta>0$.
Define, for all $x_{t_i}, y_{s_j}\in\mathcal{X}$ with $t_i,s_j\in\mathcal{T}$,
\[
D_{\alpha,\beta}(x_{t_i}, y_{s_j})
= \alpha\, d(x_{t_i}, y_{s_j}) + \beta\, \Delta(t_i, s_j).
\]
Then $D_{\alpha,\beta}$ is a metric on the product domain $\mathcal{X}\times\mathcal{T}$.
\end{lemma}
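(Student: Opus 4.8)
The plan is to verify the four metric axioms directly, viewing the arguments of $D_{\alpha,\beta}$ as pairs $(x,t_i),(y,s_j)\in\mathcal{X}\times\mathcal{T}$, so that $D_{\alpha,\beta}\bigl((x,t_i),(y,s_j)\bigr)=\alpha\,d(x,y)+\beta\,\Delta(t_i,s_j)$. Nonnegativity is immediate, since $\alpha,\beta>0$ while $d,\Delta\ge 0$, so the sum is nonnegative. Symmetry likewise follows termwise from the symmetry of $d$ and of $\Delta$.

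For the identity of indiscernibles, I would note that both summands are nonnegative and both coefficients are strictly positive, so $D_{\alpha,\beta}\bigl((x,t_i),(y,s_j)\bigr)=0$ forces $d(x,y)=0$ \emph{and} $\Delta(t_i,s_j)=0$ simultaneously; the metric axioms for $d$ and $\Delta$ then give $x=y$ and $t_i=s_j$, i.e.\ the two pairs coincide. The converse is trivial from $d(x,x)=0$ and $\Delta(t_i,t_i)=0$.

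The triangle inequality is the only step that invokes an actual inequality, and it is a one-line consequence of applying the triangle inequality within each factor and regrouping: for any three pairs $(x,t_i),(y,s_j),(z,u_k)$,
\[
\begin{aligned}
D_{\alpha,\beta}\bigl((x,t_i),(z,u_k)\bigr)
&= \alpha\,d(x,z)+\beta\,\Delta(t_i,u_k)\\
&\le \alpha\bigl(d(x,y)+d(y,z)\bigr)+\beta\bigl(\Delta(t_i,s_j)+\Delta(s_j,u_k)\bigr)\\
&= D_{\alpha,\beta}\bigl((x,t_i),(y,s_j)\bigr)+D_{\alpha,\beta}\bigl((y,s_j),(z,u_k)\bigr).
\end{aligned}
\]

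I do not expect any genuine obstacle here: the statement is the familiar fact that a positive linear combination of metrics on the factors yields a metric on the product space. The only point requiring care is notational bookkeeping—making explicit that the arguments are pairs $(x,t)$ and that ``indiscernible'' means coincidence in \emph{both} coordinates—so that the lemma can then be applied cleanly with $\alpha=1$, $\beta=\gamma$ to conclude that $D=d+\gamma\Delta$ is a metric, which is in turn fed into Lemma~\ref{lem:metric_transform} to obtain that $\tilde D=g\circ D$ is a metric.
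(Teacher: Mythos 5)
Your proof is correct and follows essentially the same route as the paper's: a direct termwise verification of each metric axiom, with the triangle inequality obtained by applying the factor metrics' triangle inequalities and regrouping. The only (harmless) difference is that you make the pair structure $(x,t)\in\mathcal{X}\times\mathcal{T}$ more explicit in the notation.
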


\begin{proof}
We verify the metric axioms.

\textbf{(i) Nonnegativity and identity of indiscernibles.}
Since $d,\Delta\ge0$ and $\alpha,\beta>0$, we have $D_{\alpha,\beta}\ge0$.
Moreover,
\[
D_{\alpha,\beta}(x_{t_i}, y_{s_j}) = 0
\iff d(x_{t_i}, y_{s_j}) = 0 \text{ and } \Delta(t_i, s_j) = 0
\iff x_{t_i}=y_{s_j}\text{ and } t_i=s_j.
\]

\textbf{(ii) Symmetry.}
By symmetry of $d$ and $\Delta$,
\[
D_{\alpha,\beta}(x_{t_i}, y_{s_j})
= \alpha\,d(x_{t_i}, y_{s_j}) + \beta\,\Delta(t_i, s_j)
= \alpha\,d(y_{s_j}, x_{t_i}) + \beta\,\Delta(s_j, t_i)
= D_{\alpha,\beta}(y_{s_j}, x_{t_i}).
\]

\textbf{(iii) Triangle inequality.}
For any $x_{t_i}, y_{s_j}, z_{u_k}$,
\begin{align*}
D_{\alpha,\beta}(x_{t_i}, z_{u_k})
&= \alpha\,d(x_{t_i}, z_{u_k}) + \beta\,\Delta(t_i, u_k)\\
&\le \alpha\,[d(x_{t_i}, y_{s_j}) + d(y_{s_j}, z_{u_k})]
   + \beta\,[\Delta(t_i, s_j) + \Delta(s_j, u_k)]\\
&= D_{\alpha,\beta}(x_{t_i}, y_{s_j})
 + D_{\alpha,\beta}(y_{s_j}, z_{u_k}),
\end{align*}
using the triangle inequalities of $d$ and $\Delta$.
Hence $D_{\alpha,\beta}$ satisfies all metric axioms.
\end{proof}

\begin{corollary}[Product-space metric for observations and time]
\label{cor:product_metric}
Let $(\mathcal{X}, d)$ and $(\mathcal{T}, \Delta)$ be metric spaces,
and let $\gamma>0$.
Define
\[
D(x_{t_i}, y_{s_j}) = d(x_{t_i}, y_{s_j}) + \gamma\,\Delta(t_i, s_j).
\]
Then $D$ is a metric on $\mathcal{X}\times\mathcal{T}$.
\end{corollary}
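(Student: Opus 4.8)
The plan is to obtain Corollary~\ref{cor:product_metric} as an immediate specialization of Lemma~\ref{lem:sum_metrics}. Observe that the map
\[
D(x_{t_i}, y_{s_j}) = d(x_{t_i}, y_{s_j}) + \gamma\,\Delta(t_i, s_j)
\]
is exactly $D_{\alpha,\beta}$ from Lemma~\ref{lem:sum_metrics} under the substitution $\alpha = 1$ and $\beta = \gamma$. Since $\gamma > 0$ by hypothesis, and $1 > 0$ trivially, both coefficients are strictly positive, so the hypotheses of Lemma~\ref{lem:sum_metrics} are met. The conclusion that $D$ is a metric on $\mathcal{X} \times \mathcal{T}$ then follows directly.

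If a self-contained argument is preferred instead of invoking the lemma, I would verify the three metric axioms in turn. First, nonnegativity and identity of indiscernibles: $d, \Delta \ge 0$ and $\gamma > 0$ give $D \ge 0$, and $D(x_{t_i}, y_{s_j}) = 0$ forces both $d(x_{t_i}, y_{s_j}) = 0$ and $\Delta(t_i, s_j) = 0$, hence $x_{t_i} = y_{s_j}$ and $t_i = s_j$. Second, symmetry follows termwise from the symmetry of $d$ and of $\Delta$. Third, the triangle inequality follows by adding the triangle inequality for $d$ (scaled by $1$) to the triangle inequality for $\Delta$ scaled by $\gamma > 0$.

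There is essentially no obstacle here: the statement is a corollary precisely because it is a direct instantiation of the preceding lemma, and the only thing to "check" is that the coefficient pair $(1, \gamma)$ lies in the positive quadrant, which is immediate from $\gamma > 0$. The one point worth noting is that we are implicitly treating a pair $(x, t_i)$ with observation $x$ and timestamp $t_i$ as an element of $\mathcal{X} \times \mathcal{T}$; the notation $x_{t_i}$ is simply shorthand for such a pair, and the identity-of-indiscernibles argument is where this bookkeeping matters, since equality in the product space means equality in both coordinates.
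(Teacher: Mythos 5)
Your proposal is correct and matches the paper's proof exactly: the paper also derives the corollary by instantiating Lemma~\ref{lem:sum_metrics} with $\alpha=1$ and $\beta=\gamma$. The optional self-contained verification you sketch simply reproduces the lemma's own argument, so nothing further is needed.
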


\begin{proof}
Immediate from Lemma~\ref{lem:sum_metrics} by setting $\alpha=1$ and $\beta=\gamma$.
\end{proof}

\subsection{GTWED Metric Theorem}

\begin{theorem}[GTWED is a metric]
\label{thm:gtwed_metric}
Let $(\mathcal{X}, d)$ and $(\mathcal{T}, \Delta)$ be metric spaces,
$\gamma>0$, $\lambda\ge0$,
and $g$ satisfy the assumptions of Lemma~\ref{lem:metric_transform}.
Then the recursively defined $\delta_{\lambda,\gamma}$ of 
Section~\ref{sec:definition_gtwed} is a metric on $\mathcal{U}$.
\end{theorem}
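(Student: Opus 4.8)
The plan is to recognize that $\delta_{\lambda,\gamma}$ is an instance of precisely the structure Marteau analyses in Proposition~\ref{prop:twed_metric}, with the regularized cost $\tilde D$ playing the role of the pointwise metric, and then to verify that Marteau's argument uses nothing about that cost beyond the metric axioms and the strict positivity of gap costs. The first step is to certify that $\tilde D$ is a metric on the product domain $\mathcal X\times\mathcal T$: Corollary~\ref{cor:product_metric} gives that $D(a_{t(i)},b_{s(j)})=d(a_{t(i)},b_{s(j)})+\gamma\,\Delta(t_i,s_j)$ is a metric, and Lemma~\ref{lem:metric_transform} then gives that $\tilde D=g\circ D$ is a metric, using exactly the hypotheses imposed on $g$. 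Identifying each entry $a_{t(i)}$ with the pair $(a_{t(i)},t_i)\in\mathcal X\times\mathcal T$, the local costs read $\Gamma_A=\tilde D\big((a_{t(p)},t_p),(a_{t(p-1)},t_{p-1})\big)+\lambda$, $\Gamma_B$ analogously, and $\Gamma_{A,B}=\tilde D\big((a_{t(p)},t_p),(b_{s(q)},s_q)\big)+\tilde D\big((a_{t(p-1)},t_{p-1}),(b_{s(q-1)},s_{q-1})\big)$; this is the classical TWED recursion with the Euclidean cost $d(\cdot,\cdot)+\gamma|\cdot-\cdot|$ replaced by the metric $\tilde D$ and the order on $\mathbb R$ replaced by the total order $\le$ on $\mathcal T$. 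Since Marteau's proof invokes only (i) the metric axioms of the combined local cost, (ii) the strict positivity of local gap costs (our Lemma~\ref{lem:gap_positive}), and (iii) a total order on the index domain, the argument transfers; the remaining work is to carry the four axioms through in the present notation.

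\textbf{Nonnegativity} is an induction on $p+q$ from the nonnegativity of all local costs (which also shows the recursion is well defined and finite-valued). \textbf{Symmetry} follows because the recursion is invariant under the simultaneous swap $(A,p)\leftrightarrow(B,q)$, $\Gamma_A\leftrightarrow\Gamma_B$, while each of $\Gamma_A,\Gamma_B,\Gamma_{A,B}$ is symmetric by symmetry of $\tilde D$; again induct on $p+q$. For \textbf{identity of indiscernibles}, $\delta_{\lambda,\gamma}(A,A)=0$ by using the alignment that matches $a_{t(i)}$ with itself at every step, each such match step costing $\tilde D(a_{t(i)},a_{t(i)})+\tilde D(a_{t(i-1)},a_{t(i-1)})=g(0)+g(0)=0$; conversely, if $\delta_{\lambda,\gamma}(A,B)=0$ then Lemma~\ref{lem:gap_positive} rules out any insertion or deletion along an optimal alignment, forcing $p=q$ and a pure matching, whence each step gives $\tilde D(a_{t(i)},b_{s(i)})=0$, i.e.\ $a_{t(i)}=b_{s(i)}$ and $t_i=s_i$, so $A=B$. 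A point I would check carefully here is the behaviour of the boundary terms under the sentinel convention (where $\Delta(t_1,t_0)=0$), to confirm that the zero-cost analysis of the very first operation is not spoiled when $\lambda=0$.

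The \textbf{triangle inequality} is the main obstacle. Given $A,B,C\in\mathcal U$, take optimal alignments $\pi_{AB}$ and $\pi_{BC}$ realizing $\delta_{\lambda,\gamma}(A,B)$ and $\delta_{\lambda,\gamma}(B,C)$, and compose them into an alignment $\pi_{AC}$ of $A$ with $C$ by tracing through $B$: a matched pair $a_{t(i)}\leftrightarrow b_{s(k)}$ in $\pi_{AB}$ together with $b_{s(k)}\leftrightarrow c_{u(j)}$ in $\pi_{BC}$ produces the match $a_{t(i)}\leftrightarrow c_{u(j)}$; deletions of $A$-entries and of $C$-entries are kept, and deletions of $B$-entries are absorbed. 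Bounding the cost of $\pi_{AC}$ term by term, a composed match step costs $\tilde D(a_{t(i)},c_{u(j)})+\tilde D(a_{t(i-1)},c_{u(j-1)})$, which by the triangle inequality for $\tilde D$ is at most $\big[\tilde D(a_{t(i)},b_{s(k)})+\tilde D(b_{s(k)},c_{u(j)})\big]+\big[\tilde D(a_{t(i-1)},b_{s(k-1)})+\tilde D(b_{s(k-1)},c_{u(j-1)})\big]$, i.e.\ the sum of the corresponding match costs in $\pi_{AB}$ and $\pi_{BC}$; each deletion in $\pi_{AC}$ is charged to the identical deletion in one of the two input alignments, carrying the same additive $\lambda$. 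Hence $\operatorname{cost}(\pi_{AC})\le\operatorname{cost}(\pi_{AB})+\operatorname{cost}(\pi_{BC})=\delta_{\lambda,\gamma}(A,B)+\delta_{\lambda,\gamma}(B,C)$, and since $\delta_{\lambda,\gamma}(A,C)$ is the minimum cost over all alignments of $A$ and $C$, the triangle inequality follows. The delicate part — exactly where Marteau's original proof does its real work — is showing that this composition always yields a \emph{valid} alignment despite the interleaving of deletions coming from the two paths, the two-point ``current plus previous'' structure of $\Gamma_{A,B}$, and the sentinel/boundary steps, and that no case increases the number of gap operations; I would either reproduce that case analysis verbatim in the new notation or, more economically, note that it appeals only to the metric axioms of the local cost and therefore applies to $\tilde D$ unchanged.
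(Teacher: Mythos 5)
Your overall strategy is the same as the paper's: certify $\tilde D = g\circ(d+\gamma\Delta)$ as a metric via Lemma~\ref{lem:sum_metrics}/Corollary~\ref{cor:product_metric} and Lemma~\ref{lem:metric_transform}, dispatch nonnegativity and symmetry from the local costs, use Lemma~\ref{lem:gap_positive} to rule out gap operations in a zero-cost alignment, and prove the triangle inequality by composing optimal paths $\pi_{AB}$ and $\pi_{BC}$ through $B$ and bounding the composed cost stepwise. However, your explicit cost accounting for the composition has a genuine gap, and it sits exactly where the real work is. You assert that ``each deletion in $\pi_{AC}$ is charged to the identical deletion in one of the two input alignments, carrying the same additive $\lambda$.'' This is false for the pairings (Match in $\pi_{AB}$, Del-$B$ in $\pi_{BC}$) and (Del-$B$ in $\pi_{AB}$, Match in $\pi_{BC}$): there the composed path must emit a \emph{new} deletion of an $A$- or $C$-element that appears as a deletion in neither input path. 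Its cost is $\Gamma_A=\tilde D(a_{t(p)},a_{t(p-1)})+\lambda$, which involves consecutive elements of $A$, whereas the consumed costs are $\Gamma_{A,B}$ and $\Gamma_B=\tilde D(b_{s(q)},b_{s(q-1)})+\lambda$; the required bound $\Gamma_A\le\Gamma_{A,B}+\Gamma_B$ is not an identity but needs two applications of the triangle inequality for $\tilde D$ along the chain $a_{t(p)}\to b_{s(q)}\to b_{s(q-1)}\to a_{t(p-1)}$ (and symmetrically $\Gamma_C\le\Gamma_B+\Gamma_{B,C}$). These are precisely the paper's inequalities \eqref{eq:MBtoA} and \eqref{eq:BCtoC}, and without them your term-by-term bound does not close. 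You do flag the case analysis as deferred, but the fallback you offer (``note that it appeals only to the metric axioms'') is not enough on its own: one must actually exhibit these two inequalities, since they are where the additivity of $\lambda$ and the two-point structure of $\Gamma_{A,B}$ interact.

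Two smaller points. First, your identification of each entry with the pair $(a_{t(i)},t_i)\in\mathcal X\times\mathcal T$ is a clean way to see $\Gamma_A,\Gamma_B,\Gamma_{A,B}$ as instances of Marteau's local costs, and it matches the paper's Remark on practical representation; reducing the whole theorem to Proposition~\ref{prop:twed_metric} would be a legitimately shorter route if you verified that Marteau's argument never uses the Euclidean structure, but the paper instead re-proves the axioms directly, so you should too unless you make that verification explicit. Second, your worry about the sentinel convention and the case $\lambda=0$ is well founded: with $t(0):=t(1)$ the very first deletion costs only $\lambda$, so Lemma~\ref{lem:gap_positive} does not apply at the boundary, and the reverse direction of identity of indiscernibles needs a separate word there. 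The paper's proof glosses over this as well, so raising it is a point in your favor rather than a defect relative to the paper.
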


\begin{proof}
We prove that $\delta_{\lambda,\gamma}$ satisfies the four metric axioms. 
Throughout, recall $\tilde D := g\circ(d+\gamma\Delta)$ is a metric on $\mathcal{X}\times\mathcal{T}$ by Lemma~\ref{lem:metric_transform}, and that gap costs are strictly positive by Lemma~\ref{lem:gap_positive}. 
Whenever a term like $\Gamma_{A,B}$ (or $\Gamma_A$, $\Gamma_B$, $\Gamma_C$) appears, the involved indices are assumed to be at least $2$ so that the ``previous'' items exist; boundary steps are handled by the initialization and do not require the local inequalities below.

\paragraph{(i) Nonnegativity.}
By construction, the initialization of the recursion is nonnegative,
and all local operation costs are nonnegative 
(Lemma~\ref{lem:gap_positive}).
Consequently, every admissible dynamic-programming path 
has a nonnegative cumulative cost.
Taking the minimum over such paths preserves nonnegativity,
so $\delta_{\lambda,\gamma}\ge0$.

\paragraph{(ii) Identity of indiscernibles.}
$(\Rightarrow)$
If $A=B$ (same values and same timestamps), the path using only matches has zero cost because $\tilde D(x,x)=g(0)=0$. Hence $\delta_{\lambda,\gamma}(A,B)=0$.

$(\Leftarrow)$
Conversely, if $\delta_{\lambda,\gamma}(A,B)=0$, no gap operation can occur since every gap costs $>0$ by Lemma~\ref{lem:gap_positive}. Thus all steps are matches and each match contributes
\[
\Gamma_{A,B}=\tilde D(a_{t(i)},b_{s(i)})+\tilde D(a_{t(i-1)},b_{s(i-1)}).
\]
For the sum to be $0$, every $\tilde D$ term must be $0$, hence $a_{t(i)}=b_{s(i)} $ $\forall i \in \mathbb{N}$. Therefore $A=B$.

\paragraph{(iii) Symmetry.}
The recursion is symmetric in $(A,B)$ and $\tilde D$ is symmetric, so $\delta_{\lambda,\gamma}(A,B)=\delta_{\lambda,\gamma}(B,A)$.

\paragraph{(iv) Triangle inequality.}
We show
\[
\delta_{\lambda,\gamma}(A,C) \;\le\; \delta_{\lambda,\gamma}(A,B) \;+\; \delta_{\lambda,\gamma}(B,C).
\]
The proof is constructive: from an optimal alignment path between $A$ and $B$ and an optimal alignment path between $B$ and $C$, we build a valid alignment path between $A$ and $C$ whose cost is at most the sum of the two optimal costs. Since $\delta_{\lambda,\gamma}(A,C)$ is the minimum over all $A$--$C$ paths, the inequality will follow.

\medskip\noindent
\textit{Notation for paths.}
An alignment path between sequences $X$ and $Y$ is a sequence of grid points 
\(
(0,0)=(i_0,j_0), (i_1,j_1),\dots,(i_m,j_m)=(|X|,|Y|)
\)
with unit moves of three types:
\begin{align*}
\text{Del-}X &: (i,j)\to(i{+}1,j) \quad\text{cost }\Gamma_X,\\
\text{Match} &: (i,j)\to(i{+}1,j{+}1) \quad\text{cost }\Gamma_{X,Y},\\
\text{Del-}Y &: (i,j)\to(i,j{+}1) \quad\text{cost }\Gamma_Y.
\end{align*}
Its cost is the sum of local costs along the path. 
Let $\pi_{AB}$ and $\pi_{BC}$ be \emph{optimal} paths for $\delta_{\lambda,\gamma}(A,B)$ and $\delta_{\lambda,\gamma}(B,C)$, respectively.

\medskip\noindent
\textit{Three local inequalities.}
For any indices where the terms are defined, the metric property of $\tilde D$ implies:
\begin{align}
\Gamma_{A,C} 
&= \tilde D(a_{t(p)},c_{u(r)}) + \tilde D(a_{t(p-1)},c_{u(r-1)}) \nonumber\\
&\le \bigl[\tilde D(a_{t(p)},b_{s(q)}) + \tilde D(b_{s(q)},c_{u(r)})\bigr]
   + \bigl[\tilde D(a_{t(p-1)},b_{s(q-1)}) + \tilde D(b_{s(q-1)},c_{u(r-1)})\bigr] \nonumber\\
&= \Gamma_{A,B} + \Gamma_{B,C},
\label{eq:MM}
\\[4pt]
\Gamma_A 
&= \tilde D(a_{t(p)},a_{t(p-1)}) + \lambda \nonumber\\
&\le \tilde D(a_{t(p)},b_{s(q)}) 
   + \tilde D(b_{s(q)},a_{t(p-1)}) + \lambda \nonumber\\
&\le \tilde D(a_{t(p)},b_{s(q)}) 
   + \tilde D(b_{s(q)},b_{s(q-1)}) 
   + \tilde D(b_{s(q-1)},a_{t(p-1)}) + \lambda \nonumber\\
&= \Gamma_{A,B} + \Gamma_B,
\label{eq:MBtoA}
\\[4pt]
\Gamma_C 
&= \tilde D(c_{u(r)},c_{u(r-1)}) + \lambda \nonumber\\
&\le \tilde D(c_{u(r)},b_{s(q)}) 
   + \tilde D(b_{s(q)},c_{u(r-1)}) + \lambda \nonumber\\
&\le \tilde D(c_{u(r)},b_{s(q)}) 
   + \tilde D(b_{s(q)},b_{s(q-1)}) 
   + \tilde D(b_{s(q-1)},c_{u(r-1)}) + \lambda \nonumber\\
&= \Gamma_B + \Gamma_{B,C}.
\label{eq:BCtoC}
\end{align}
Each line is a direct application of the triangle inequality for the metric $\tilde D$ (possibly twice), plus the fact that $\lambda$ is additive.

\medskip\noindent
\textit{Path composition.}
We now traverse $\pi_{AB}$ and $\pi_{BC}$ \emph{in lockstep on the $B$-index}. 
At any moment we keep a triple of indices $(p,q,r)$ that point to positions in $A,B,C$ reached so far by the partial traversals of $\pi_{AB}$ and $\pi_{BC}$. 
From the current pair of elementary moves in the two paths, we emit an elementary move for an $A$--$C$ path and advance the corresponding indices. There are only six effective pairings (the other three are symmetric or trivial); in each case we bound the emitted $A$--$C$ cost by the \emph{sum} of the two $A$--$B$ and $B$--$C$ costs using \eqref{eq:MM}--\eqref{eq:BCtoC}.

\begin{enumerate}
\item \textbf{(Match, Match):} 
$\pi_{AB}$ uses a match $(p{-}1,q{-}1)\to(p,q)$ and $\pi_{BC}$ uses a match $(q{-}1,r{-}1)\to(q,r)$. 
Emit an $A$--$C$ match $(p{-}1,r{-}1)\to(p,r)$.
By \eqref{eq:MM}, its cost $\Gamma_{A,C}$ is $\le \Gamma_{A,B}+\Gamma_{B,C}$.

\item \textbf{(Match, Del-$B$):} 
$\pi_{AB}$ matches $(p{-}1,q{-}1)\to(p,q)$, while $\pi_{BC}$ deletes $b_{s(q)}$, i.e., $(q{-}1,r)\to(q,r)$ via $\Gamma_B$.
Emit an $A$-deletion $(p{-}1,r)\to(p,r)$. 
By \eqref{eq:MBtoA}, its cost $\Gamma_A \le \Gamma_{A,B}+\Gamma_B$.

\item \textbf{(Del-$B$, Match):} 
$\pi_{AB}$ deletes $b_{s(q)}$, i.e., $(p,q{-}1)\to(p,q)$ via $\Gamma_B$, while $\pi_{BC}$ matches $(q{-}1,r{-}1)\to(q,r)$.
Emit a $C$-deletion $(p,r{-}1)\to(p,r)$. 
By \eqref{eq:BCtoC}, its cost $\Gamma_C \le \Gamma_B+\Gamma_{B,C}$.

\item \textbf{(Del-$A$, Del-$C$):} 
$\pi_{AB}$ does $(p{-}1,q)\to(p,q)$ with cost $\Gamma_A$, and $\pi_{BC}$ does $(q,r{-}1)\to(q,r)$ with cost $\Gamma_C$.
Emit both deletions in either order on the $A$--$C$ path. 
The emitted cost equals $\Gamma_A+\Gamma_C$, which is trivially $\le \Gamma_A+\Gamma_C$.

\item \textbf{(Del-$A$, Match):} 
$\pi_{AB}$ deletes $a_{t(p)}$ and $\pi_{BC}$ matches $(q{-}1,r{-}1)\to(q,r)$. 
Emit the $A$-deletion on the $A$--$C$ path; its cost is $\Gamma_A \le \Gamma_A+\Gamma_{B,C}$.

\item \textbf{(Match, Del-$C$):} 
$\pi_{AB}$ matches $(p{-}1,q{-}1)\to(p,q)$ and $\pi_{BC}$ deletes $c_{u(r)}$.
Emit the $C$-deletion on the $A$--$C$ path; its cost is $\Gamma_C \le \Gamma_{A,B}+\Gamma_C$.
\end{enumerate}

All remaining pairings are symmetric to the above. 
Thus, step by step, the cost emitted for the composed $A$--$C$ path is \emph{no larger} than the sum of the two costs consumed from $\pi_{AB}$ and $\pi_{BC}$ at that step. Summing over the full traversal yields
\[
\mathrm{cost}(\text{composed }A\text{--}C\text{ path}) \;\le\; 
\mathrm{cost}(\pi_{AB}) + \mathrm{cost}(\pi_{BC})
\;=\; \delta_{\lambda,\gamma}(A,B)+\delta_{\lambda,\gamma}(B,C).
\]
Since $\delta_{\lambda,\gamma}(A,C)$ is the minimum over all $A$--$C$ paths,
\[
\delta_{\lambda,\gamma}(A,C)\ \le\ \mathrm{cost}(\text{composed }A\text{--}C\text{ path})
\ \le\ \delta_{\lambda,\gamma}(A,B)+\delta_{\lambda,\gamma}(B,C),
\]
which proves the triangle inequality.

\medskip
Combining (i)–(iv), $\delta_{\lambda,\gamma}$ is a metric on $\mathcal{U}$.
\end{proof}

\begin{remark}[Failure modes]
If $g$ is not strictly positive for $x>0$
(e.g.\ flat near zero),
identity of indiscernibles may fail.
The typical regularizations
$g(x)=x+\alpha\min(x,\tau)$ or 
$g(x)=x+\alpha(1-e^{-x/\tau})$
satisfy $g(x)>0$ for $x>0$, preserving the metric property.
\end{remark}

\begin{proposition}
Let $(X,d)$ be a metric space and $(T,\Delta)$ a metric space for the time domain.  
If we take $X = \mathbb{R}^d$, $T \subseteq \mathbb{R}$, 
$d(x,y) = \lVert x - y \rVert_2$, $\Delta(t(i), t(j)) = |t(i) - t(j)|$, 
and choose $g(x) = x$, 
then the Generalized Time Warp Edit Distance (GTWED) reduces exactly to the Time Warp Edit Distance (TWED) 
as originally defined by \citet{marteau2009twed}.
\end{proposition}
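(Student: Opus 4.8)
The plan is to show that under the stated specializations every quantity entering the GTWED recursion reduces to the corresponding quantity of Marteau's TWED recursion, so that the two dynamic programs coincide step for step, and then to conclude equality of the resulting values by induction on the total prefix length $p+q$.

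First I would carry out the substitution. With $d(x,y)=\lVert x-y\rVert_2$ and $\Delta(t(i),t(j))=|t(i)-t(j)|$, Corollary~\ref{cor:product_metric} gives $D(a_{t(i)},b_{s(j)})=\lVert a_{t(i)}-b_{s(j)}\rVert_2+\gamma\,|t_i-s_j|$, and since $g$ is the identity we have $\tilde D=D$. Substituting this form of $\tilde D$ into the definitions of $\Gamma_A$, $\Gamma_B$, and $\Gamma_{A,B}$ from Section~\ref{sec:definition_gtwed}, a one-line computation shows that these three increments become, respectively, $\lVert a_{t(p)}-a_{t(p-1)}\rVert_2+\gamma|t_p-t_{p-1}|+\lambda$, its $B$-analogue $\lVert b_{s(q)}-b_{s(q-1)}\rVert_2+\gamma|s_q-s_{q-1}|+\lambda$, and $\lVert a_{t(p)}-b_{s(q)}\rVert_2+\lVert a_{t(p-1)}-b_{s(q-1)}\rVert_2+\gamma|t_p-s_q|+\gamma|t_{p-1}-s_{q-1}|$ --- i.e.\ exactly the three bracketed terms inside the $\min$ in the Traditional TWED definition. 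Hence the GTWED recursion and the TWED recursion have identical right-hand sides as functions of the same sub-problems.

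Next I would reconcile the boundary conditions. GTWED is initialized by $\delta_{\lambda,\gamma}(A^0,B^0)=0$ together with the boundary rows $\delta_{\lambda,\gamma}(A^p,\Omega)$ and $\delta_{\lambda,\gamma}(\Omega,B^q)$, which are built from the same $\Gamma_A$, $\Gamma_B$ under the sentinel convention $a_{t(0)}:=a_{t(1)}$, $t(0):=t(1)$ (and symmetrically for $B$). I would verify that this produces the same boundary array as the initialization used by \citet{marteau2009twed}; once the recursion and the boundary both match, a routine induction on $p+q$ --- base case $p=q=0$, inductive step using the recursion identity established above --- yields $\delta_{\lambda,\gamma}(A^p,B^q)=\delta_{\lambda,\gamma}^{\mathrm{TWED}}(A^p,B^q)$ for every pair of finite time series, which is the claim.

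The main obstacle is purely a matter of bookkeeping rather than of mathematical depth: it is making sure the sentinel/boundary convention adopted here for GTWED agrees cell-for-cell with the convention in Marteau's original formulation, since that is the only place where the two definitions could quietly disagree. The interior of the recursion matches immediately once $g=\mathrm{id}$, $d=\lVert\cdot\rVert_2$, and $\Delta=|\cdot|$ are plugged in, so the entire content of the proof concentrates on aligning these boundary conventions and then running the induction.
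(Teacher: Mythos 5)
Your proposal is correct and follows essentially the same route as the paper: the paper's proof consists precisely of substituting $g=\mathrm{id}$, $d=\lVert\cdot\rVert_2$, $\Delta=|\cdot|$ into $\Gamma_A$, $\Gamma_B$, $\Gamma_{A,B}$ and observing that they become the local costs of Marteau's recursion. The induction on $p+q$ and the boundary/sentinel reconciliation you add are left implicit in the paper, so your version is simply a more carefully executed write-up of the same argument.
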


\begin{proof}
Under these parameter choices, the local cost functions of GTWED become
\begin{align*}
\Gamma_A &= d(a_p, a_{p-1}) + \gamma |t_p - t_{p-1}| + \lambda, \\
\Gamma_B &= d(b_q, b_{q-1}) + \gamma |s_q - s_{q-1}| + \lambda, \\
\Gamma_{A,B} &= d(a_p, b_q) + d(a_{p-1}, b_{q-1}) 
+ \gamma \big(|t_p - s_q| + |t_{p-1} - s_{q-1}|\big),
\end{align*}
which are precisely the local costs defining the TWED recursion.
Therefore, the GTWED formulation subsumes TWED as a special case.
\end{proof}

\section{Dynamic-programming Implementation for GTWED}

The General Time Warp Edit Distance (GTWED) can be efficiently computed using dynamic programming. 
The algorithm extends the classical TWED recursion by replacing local costs with the 
regularized metric $\tilde D = g\!\circ\!(d + \gamma \Delta)$. 
A $(p{+}1)\times(q{+}1)$ table stores the minimal cost $\mathrm{DP}[i][j]$ to transform 
prefix $A^i$ into $B^j$. 
Each step selects the minimum among insertion, deletion, and match operations, and 
$\mathrm{DP}[p][q]$ yields $\delta_{\lambda,\gamma}(A,B)$.

\begin{algorithm}[t]
\caption{General Time Warp Edit Distance (GTWED)}
\begin{algorithmic}[1]
\Require 
Sequences $A=(a_{t(1)},\dots,a_{t(p)})$, $B=(b_{s(1)},\dots,b_{s(q)})$
with strictly increasing timestamps; define sentinels $a_{t(0)}{:=}a_{t(1)}$, $b_{s(0)}{:=}b_{s(1)}$; 
increasing subadditive function $g$ with $g(0)=0$.

\Statex
\Function{GTWED}{$A,B,\lambda,\gamma,d,\Delta,g$}

  \State Define $\tilde D(x_{t_i},y_{s_j}) \gets g\!\big(d(x,y) + \gamma\,\Delta(t_i,s_j)\big)$
  \Comment{regularized local metric}

  \State Initialize matrix $\mathrm{DP}[0\!:\!p,\,0\!:\!q]$
  \State $\mathrm{DP}[0][0] \gets 0$

  \Statex
  \Comment{--- Initialization for empty prefixes ---}
  \For{$i=1$ to $p$}
    \State $\mathrm{DP}[i][0] \gets \mathrm{DP}[i-1][0] + \tilde D(a_{t(i)},a_{t(i-1)}) + \lambda$
  \EndFor
  \For{$j=1$ to $q$}
    \State $\mathrm{DP}[0][j] \gets \mathrm{DP}[0][j-1] + \tilde D(b_{s(j)},b_{s(j-1)}) + \lambda$
  \EndFor

  \Statex
  \Comment{--- Main dynamic programming recurrence ---}
  \For{$i=1$ to $p$}
    \For{$j=1$ to $q$}
      \State $c_A \gets \mathrm{DP}[i-1][j] + \tilde D(a_{t(i)},a_{t(i-1)}) + \lambda$
      \Comment{delete $a_{t(i)}$}
      \State $c_M \gets \mathrm{DP}[i-1][j-1] + \tilde D(a_{t(i)},b_{s(j)}) + \tilde D(a_{t(i-1)},b_{s(j-1)})$
      \Comment{match $a_{t(i)}$ with $b_{s(j)}$}
      \State $c_B \gets \mathrm{DP}[i][j-1] + \tilde D(b_{s(j)},b_{s(j-1)}) + \lambda$
      \Comment{delete $b_{s(j)}$}
      \State $\mathrm{DP}[i][j] \gets \min(c_A,\,c_M,\,c_B)$
    \EndFor
  \EndFor

  \State \Return $\mathrm{DP}[p][q]$
\EndFunction

\end{algorithmic}
\end{algorithm}

\clearpage

\paragraph{Implementation notes.}
\begin{itemize}
  \item \textbf{Timestamps:} $t$ and $s$ must be strictly increasing sequences in $\mathcal{T}$.
  \item \textbf{Metrics:} $d$ and $\Delta$ are true metrics; $\tilde D=g\!\circ\!(d+\gamma\Delta)$ is a valid metric on $\mathcal{X}\!\times\!\mathcal{T}$.
  \item \textbf{Initialization:} $\tilde D(a_{t(0)},a_{t(0)})$ and $\tilde D(b_{s(0)},b_{s(0)})$ are set to $0$ by defining $a_{t(0)}\!=a_{t(1)}$, $b_{s(0)}\!=b_{s(1)}$.
  \item \textbf{Complexity:} Time $O(pq)$, space $O(pq)$ (can be reduced to $O(\min(p,q))$ using two rolling rows).
  \item \textbf{Output:} $\delta_{\lambda,\gamma}(A,B) = \mathrm{DP}[p][q]$ satisfies all metric axioms (Theorem~\ref{thm:gtwed_metric}).
\end{itemize}

\begin{remark}[Practical representation]
For implementation convenience, represent each time-indexed observation as a pair $(a_{t(i)},t_i)\in\mathcal{X}\times\mathcal{T}$. This lets you call the local metric directly on tuples, e.g.\ $\tilde D\big((a_{t(i)},t_i),(b_{s(j)},s_j)\big)$, and simplifies DP bookkeeping by keeping values and timestamps together.
\end{remark}

\section{Conclusion}

This report introduced a General Time Warp Edit Distance (GTWED) as a straightforward extension of the Time Warp Edit Distance (TWED) proposed by \citet{marteau2009twed}. While TWED operates on real-valued, time-indexed sequences with scalar timestamps, GTWED generalizes the setting to arbitrary metric spaces for both the observation and index domains. This allows the distance to be defined over discrete spaces with user-defined metrics and to handle structured or multivariate sequence elements in a consistent metric framework.

The main difference between GTWED and TWED lies in the abstraction of the underlying spaces: GTWED treats both temporal and feature domains as general metric spaces rather than real numbers. Apart from this, the recursive dynamic-programming structure and metric properties remain similar.

However, although the theoretical time complexity of GTWED is the same as that of TWED, both requiring $\mathcal{O}(pq)$ operations for sequences of lengths $p$ and $q$, the practical computational cost can be higher. This is because each dynamic programming step in GTWED may involve evaluating user-defined or high-dimensional metrics on structured elements, which can be significantly more expensive than the simple Euclidean computations used in the original TWED.

It is also important to note that the upper bound on the TWED measure proposed by \citet{marteau2009twed}—which relates the distance between original and down-sampled representations—relies crucially on the use of the $L_p$ norm. Since GTWED allows arbitrary metrics for both the feature and index spaces, this assumption no longer holds, and the corresponding upper bound derived for TWED does not directly extend to GTWED. In other words, the geometric guarantees based on $L_p$ properties (such as the additive decomposition and norm-induced convexity) fail to generalize in the metric setting. Therefore, while GTWED retains the same recursive structure and metric validity, the analytical upper bound provided in the original TWED framework is not valid under the generalized formulation.

\bibliography{references.bib}
\end{document}